\newtheorem{theorem}{Theorem}
\newtheorem{lemma}[theorem]{Lemma}
\def\@endtheorem{\endtrivlist}
\newcounter{rrule}
\newenvironment{rrule}{\refstepcounter{rrule}\par\smallskip\noindent
\textbf{(R\arabic{rrule})}\quad}{}
\newcounter{brule}
\newenvironment{brule}{\refstepcounter{brule}\par\smallskip\noindent
\textbf{(B\arabic{brule})}\quad}{}
\newcommand{\currentrule}{B\arabic{brule}}
\newcommand{\twins}[1]{\mathrm{Twins}(#1)}
\begin{document}

\title{Algorithms for deletion problems on split graphs}
\author{Dekel Tsur%
\thanks{Ben-Gurion University of the Negev.
Email: \texttt{dekelts@cs.bgu.ac.il}}}
\date{}
\maketitle

\begin{abstract}
In the \emph{Split to Block Vertex Deletion} and
\emph{Split to Threshold Vertex Deletion} problems the input is
a split graph $G$ and an integer $k$, and the goal is to decide whether there
is a set $S$ of at most $k$ vertices such that $G-S$ is a block graph and $G-S$
is a threshold graph, respectively.
In this paper we give algorithms for these problems whose running times are
$O^*(2.076^k)$ and $O^*(2.733^k)$, respectively.
\end{abstract}

\paragraph{Keywords} graph algorithms, parameterized complexity.

\section{Introduction}
A graph $G$ is called a \emph{split graph} if its vertex set can be partitioned
into two disjoint sets $C$ and $I$ such that $C$ is a clique and $I$ is an
independent set.
A graph $G$ is a \emph{block graph} if every biconnected component of $G$
is a clique.
A graph $G$ is a \emph{threshold graph} if there is a $t \in \mathbb{R}$
and a function $f \colon V(G) \to \mathbb{R}$ such that
for every $u,v \in V(G)$, $(u,v)$ is an edge in $G$ if and only if
$f(u)+f(v) \geq t$.

In the \emph{Split to Block Vertex Deletion} (SBVD) problem the input is
a split graph $G$ and an integer $k$,
and the goal is to decide whether there is a set $S$ of at most $k$ vertices
such that $G-S$ is a block graph.
Similarly, in the \emph{Split to Threshold Vertex Deletion} (STVD) problem
the input is a split graph $G$ and an integer $k$,
and the goal is to decide whether there is a set $S$ of at most $k$ vertices
such that $G-S$ is a threshold graph.
The SBVD and STVD problems were shown to be NP-hard by Cao et al.~\cite{cao2018vertex}.
A split graph $G$ is a block graph if and only if $G$ does not contain
an induced diamond, where a diamond is a graph with $4$ vertices and $5$ edges.
Additionally, a split graph $G$ is threshold graph if and only if $G$
does not contain an induced path with 4 vertices.
Therefore, SBVD and STVD are special cases of the 4-Hitting Set problem.
Using the fastest known parameterized algorithm for 4-Hitting Set,
due to Fomin et al.~\cite{fomin2010iterative}, the SBVD and STVD problems
can be solved in $O^*(3.076^k)$ time.
Choudhary et al.~\cite{choudhary2019vertex} gave faster algorithms for
SBVD and STVD  whose running times are $O^*(2.303^k)$ and $O^*(2.792^k)$,
respectively.
In this paper we give algorithms for SBVD and STVD whose running times are
$O^*(2.076^k)$ and $O^*(2.733^k)$, respectively.

\section{Preliminaries}
For a graph $G$ and a vertex $v \in V(G)$,
$N(v)$ is the set of vertices that are adjacent to $v$.
For a set $S$ of vertices, $G-S$ is the graph obtained from $G$ by
deleting the vertices of $S$ (and incident edges).
Let $P_4$ denote a graph that is a path on 4 vertices.

In the \emph{3-Hitting Set} problem the input is a family $\mathcal{F}$ of
subsets of size at most 3 of a set $U$ and an integer $k$,
and the goal is to decide whether there is a set $X \subseteq U$
of size at most $k$ such that
$X\cap A \neq \emptyset$ for every $A \in \mathcal{F}$.

For two families of sets $\mathcal{A}$ and $\mathcal{B}$,
$\mathcal{A} \circ \mathcal{B} 
= \{A\cup B \colon A \in \mathcal{A}, B \in \mathcal{B} \}$.

\subsection{Branching algorithm}
A \emph{branching algorithm}  (cf.~\cite{cygan2015parameterized})
for a parameterized problem is a recursive algorithm that uses \emph{rules}.
Given an instance $(G,k)$ to the problem, the algorithm applies some rule.
In each rule, the algorithm either computes the answer to the instance $(G,k)$,
or performs recursive calls on instances
$(G_1,k-c_1),\ldots,(G_t,k-c_t)$, where $c_1,\ldots,c_t > 0$.
The algorithm returns `yes' if and only if at least one recursive call
returned `yes'.
The rule is called a \emph{reduction rule} if $t = 1$, and
a \emph{branching rule} if $t \geq 2$.
To analyze the time complexity of the algorithm, define $T(k)$ to be the
maximum number of leaves in the recursion tree of the algorithm when the
algorithm is run on an instance with parameter $k$.
Each branching rule corresponds to a recurrence on $T(k)$:
\[ T(k) \leq T(k-c_1)+\cdots+T(k-c_t). \]
The largest real root of $P(x) = 1-\sum_{i=1}^t x^{-c_i}$
is called the \emph{branching number} of the rule.
The vector $(c_1,\ldots,c_t)$ is called the \emph{branching vector} of
the rule.

Let $\gamma$ be the maximum branching number over all branching rules.
Assuming that the application of a rule takes $O^*(1)$ time,
the time complexity of the algorithm is $O^*(\gamma^k)$.

\section{Algorithm for SBVD}

\begin{lemma}\label{lem:block}
Let $G$ be a split graph with a partition $C,I$ of its vertices.
$G$ is a block graph if and only if
(1) A vertex in $I$ with degree at least $2$ is adjacent to all vertices in $C$, and
(2) There is at most one vertex in $I$ with degree at least 2.
\end{lemma}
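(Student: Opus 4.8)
The plan is to reduce the statement to the fact recalled in the introduction, namely that a split graph is a block graph if and only if it contains no induced diamond. Given that, it suffices to prove the purely structural claim that a split graph $G$ with partition $C,I$ is diamond-free if and only if conditions (1) and (2) hold. Throughout, the key elementary fact I would use repeatedly is that every neighbour of a vertex of $I$ lies in $C$ (since $I$ is independent), so such a vertex has at most $|C|$ neighbours and in particular has degree at least $2$ only when $|C|\ge 2$.

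For the ``only if'' direction (if $G$ is diamond-free then (1) and (2) hold) I would argue by contraposition. If (1) fails, there is $v\in I$ with $\deg(v)\ge 2$ and some $w\in C$ with $v\not\sim w$; picking two neighbours $a,b\in C$ of $v$ (necessarily distinct from $w$), the set $\{a,b,w,v\}$ induces a diamond, because $a,b,w$ are pairwise adjacent in the clique, $v$ is adjacent to both $a$ and $b$, and $v\not\sim w$. If (1) holds but (2) fails, choose two vertices $c,d\in I$ with degree at least $2$; by (1) each is adjacent to all of $C$, and since $\deg(c)\ge 2$ we get $|C|\ge 2$, so for any $a,b\in C$ the set $\{a,b,c,d\}$ induces a diamond ($a\sim b$ in the clique, $c$ and $d$ each adjacent to both $a$ and $b$ by (1), and $c\not\sim d$ since $I$ is independent).

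For the ``if'' direction, suppose $G$ contains an induced diamond on vertices $\{a,b,c,d\}$ whose unique non-edge is $cd$. Since $I$ is independent and the diamond has only one non-adjacent pair, at most two of these four vertices lie in $I$, and if two of them do, they must be exactly $c$ and $d$. They cannot all lie in $C$, as four pairwise-adjacent vertices induce a $K_4$, not a diamond. This leaves two cases. If $c,d\in I$, then $a,b\in C$ and each of $c,d$ is adjacent to both $a$ and $b$, so both have degree at least $2$ and (2) fails. If exactly one of the four vertices lies in $I$, the other three are pairwise adjacent and hence induce a triangle of the diamond; the only triangles are $\{a,b,c\}$ and $\{a,b,d\}$, so the vertex of $I$ is one of the two degree-$2$ vertices, say $d$, and then $d\in I$ has neighbours $a,b$ (degree at least $2$) but $d\not\sim c$ with $c\in C$, so (1) fails. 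Hence if both (1) and (2) hold, $G$ is diamond-free, completing the equivalence.

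I expect the main (mild) obstacle to be the case analysis of how a diamond can be embedded in a split graph — in particular pinning down that the two $I$-vertices of a diamond must be its degree-$2$ vertices, and that a single $I$-vertex must also be a degree-$2$ vertex — together with the logical bookkeeping on the contrapositive side, where one proves ``$\lnot(1)\Rightarrow$ diamond'' unconditionally but only ``$(1)\wedge\lnot(2)\Rightarrow$ diamond''. Everything else reduces to the observation that vertices of $I$ have all their neighbours in $C$.
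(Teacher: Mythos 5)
Your proposal is correct and follows essentially the same route as the paper: both reduce the lemma to the characterization of split block graphs as diamond-free graphs, construct a diamond whenever (1) or (2) fails, and analyze how an induced diamond can intersect $I$ (in one or two vertices) to show that (1) and (2) together exclude diamonds. Your write-up is somewhat more detailed than the paper's in the case analysis of where the diamond's vertices sit, but the ideas coincide.
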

\begin{proof}
Suppose that $G$ is a block graph.
If a vertex $v \in I$ has degree at least 2, let $a_1,a_2 \in C$ be two neighbors
of $v$. For every $b \in C\setminus \{a_1,a_2\}$, we have that $v$ is adjacent
o $b$, otherwise $v,a_1,a_2,b$ induces a diamond, contradicting the assumption that
$G$ is a block graph.

Now, suppose conversely that there are $u,v \in I$ with degree at least 2.
Let $a_1,a_2$ be two vertices in $C$. From the paragraph above,
$a_1,a_2$ are neighbors of $u$ and of $v$.
Therefore, $u,v,a_1,a_2$ induces a diamond, contradicting the assumption that
$G$ is a block graph.

To prove the opposite direction, suppose that $G$ satisfied (1) and (2).
Suppose conversely that $G$ is not a block graph.
Then there is a set of vertices $X$ that induces a diamond.
Since $C$ is a clique and $I$ is an independent set, $|X\cap I|$ is
equal to either 1 or 2.
If $|X\cap I|=1$ then $G$ does not satisfy (1), and if $|X\cap I|=2$ then
$G$ does not satisfy (2), a contradiction.
Therefore $G$ is a block graph.
\end{proof}

IF $(G,k)$ is a yes instance, let $S$ be a solution for $(G,k)$.
By Lemma~\ref{lem:block}, there is at most one vertex in $I\setminus S$ with
degree at least 2 in $G-S$.
Denote this vertex, if it exists, by $v^*$.
For every $v \in I \setminus (\{v^*\} \cup S)$ we have that
$v$ has degree at most 1 in $G-S$.

The algorithm for SBVD goes over all possible choices for the vertex
$v^* \in I$.
Additionally, the algorithm also inspects the case in which no such vertex
exist.
For every choice of $v^*$, the algorithm deletes from $G$ all the vertices of
$C$ that are not adjacent to $v^*$ and decreases the value of $k$ by the number
of vertices deleted.
When the algorithm inspects the case when $v^*$ does not exists, the graph is
not modified.

For every choice of $v^*$,
the algorithm generates an instance $(\mathcal{F},k)$ of 3-Hitting Set as follows:
For every vertex $v \in I \setminus \{v^*\}$ that has at least two neighbors,
and for every two neighbors $a,b \in C$ of $v$, the algorithm adds
the set $\{v,a,b\}$ to $\mathcal{F}$.
The algorithm then uses the algorithm of
Wahlstr{\"o}m~\cite{wahlstrom2007algorithms} to solve the instance
$(\mathcal{F},k)$ in $O^*(2.076^k)$ time.
If $(\mathcal{F},k)$ is a yes instance of 3-Hitting Set then the algorithm
returns yes.
If all the constructed 3-Hitting Set instances, for all choices of $v^*$, are no
instances, the algorithm returns no.

\section{Algorithm for STVD}

Let $I_0$ be the set of all vertices in $I$ that have minimum degree
(namely, a vertex $u \in I$ is in $I_0$ if $|N(u)|\leq |N(v)|$ for every
$v\in I$).
We say that two vertices $u,v \in I$ are \emph{twins} if $N(u) = N(v)$.
Let $\twins{v}$ be a set containing $v$ and all the twins of $v$.
Recall that a split graph $G$ is a threshold graph if and only if $G$ does not
contain an induced $P_4$.
Note that an induced $P_4$ in $G$ must be of the form $u,a,b,v$ where
$u,v \in I$ and $a,b \in C$.

The algorithm for STVD is a branching algorithm.
At each step, the algorithm applies the first applicable rule from the rules
below.
The reduction rules of the algorithm are as follows.
\begin{rrule}
If $k\leq 0$ and $G$ is not a threshold graph, return `no'.
\end{rrule}

\begin{rrule}
If $G$ is an empty graph, return `yes'.
\end{rrule}

\begin{rrule}
If $v$ is a vertex such that there is no induced $P_4$ in $G$ that contains $v$,
delete $v$.\label{rrule:no-P4}
\end{rrule}

If Rule~(R\ref{rrule:no-P4}) cannot be applied we have that
for every $a\in C$ there is a vertex $v \in I$ such that $a \notin N(v)$.

We now describe the branching rules of the algorithm.
When we say that the algorithm branches on sets $S_1,\ldots,S_p$, we mean
that the algorithm is called recursively on the instances
$(G-S_1,k-|S_1|),\ldots,(G-S_p,k-|S_p|)$.

\begin{brule}
If there are non-twin vertices $u,v \in I$ such that $|N(u)| = |N(v)| = 1$,
branch on $N(u)$ and $N(v)$.\label{brule:degree-1a}
\end{brule}

To show the safeness of Rule~(\currentrule),
denote $N(u) = \{a\}$ and $N(v) = \{b\}$.
If $S$ is a solution for the instance $(G,k)$ then $S$ must contain at least one
vertex from the induced path $u,a,b,v$.
If $u \in S$ then $S' = (S \setminus \{u\}) \cup \{a\}$ is also a solution
(since every induced $P_4$ that contains $u$ also contains $a$).
Additionally, if $v \in S$ then $(S \setminus \{v\}) \cup \{b\}$ is also a
solution.
Therefore, there is a solution $S$ such that either $a \in S$ or $b \in S$.
Thus, Rule~(\currentrule) is safe.

The branching vector of Rule~(\currentrule) is $(1,1)$.

\begin{brule}
If there is a vertex $u \in I$ such that $|N(u)| = 1$, let $v \in I$
be a vertex such that $N(u) \not\subseteq N(v)$.
Branch on $\{v\}$, $N(u)$ and $N(v)$.\label{brule:degree-1}
\end{brule}

Note that the vertex $v$ exists since Rule~(R\ref{rrule:no-P4}) cannot be
applied.
To prove the safeness of Rule~(\currentrule), note that if $S$ is a solution
for the instance $(G,k)$ then either $u\in S$, $v\in S$,
$N(u) \subseteq S$, or $N(v) \subseteq S$.
If one of the last three cases occurs we are done.
Otherwise (if $u \in S$), $S' = (S \setminus \{u\}) \cup N(u)$ is also
a solution.
It follows that Rule~(\currentrule) is safe.

Since Rule~(B\ref{brule:degree-1a}) cannot be applied, $|N(v)| \geq 2$.
Therefore, the branching vector of Rule~(\currentrule) is at least $(1,1,2)$.

Note that if Rule~(\currentrule) cannot be applied, every vertex in $I$ has
degree at least~2.

\begin{brule}
If there are vertices $u,v \in I$ such that
$|N(u) \setminus N(v)| \geq 2$ and $|N(v) \setminus N(u)| \geq 2$,
branch on $\{u\}$, $\{v\}$, $N(u) \setminus N(v)$,
and $N(v) \setminus N(u)$.\label{brule:2-2}
\end{brule}

If $S$ is a solution for the instance $(G,k)$ then either
$u \in S$, $v\in S$, $N(u) \setminus N(v) \subseteq S$, or
$N(v) \setminus N(u) \subseteq S$
(If neither of the above cases hold, let
$a \in (N(u) \setminus N(v))\setminus S$ and
$b \in (N(v) \setminus N(u))\setminus S$.
Then, $u,a,b,v$ is an induced $P_4$ in $G-S$, a contradiction).
Therefore, Rule~(\currentrule) is safe.

The branching vector of Rule~(\currentrule) is at least $(1,1,2,2)$.

\begin{lemma}\label{lem:Nu-Nv}
If Rule~(\currentrule) cannot be applied and $u,v \in I$ are two
vertices such that $|N(u)| \leq |N(v)|$
then $|N(u) \setminus N(v)| \leq 1$.
\end{lemma}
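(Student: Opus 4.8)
I would prove this by contradiction, showing that if the conclusion fails then Rule~(B\ref{brule:2-2}) is in fact applicable to the pair $u,v$. So assume $u,v \in I$ with $|N(u)| \leq |N(v)|$ but $|N(u) \setminus N(v)| \geq 2$. Since Rule~(B\ref{brule:2-2}) is applicable exactly when there is a pair of vertices in $I$ with both symmetric differences of size at least $2$, it suffices to derive $|N(v) \setminus N(u)| \geq 2$ from the assumptions, as this together with $|N(u) \setminus N(v)| \geq 2$ makes $u,v$ a witnessing pair for Rule~(B\ref{brule:2-2}), contradicting the hypothesis that the rule cannot be applied.

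The key (and only) step is a counting identity. Writing $c = |N(u) \cap N(v)|$, we have $|N(u)| = c + |N(u) \setminus N(v)|$ and $|N(v)| = c + |N(v) \setminus N(u)|$. Subtracting the first from the second and using $|N(v)| \geq |N(u)|$ gives
\[
0 \;\leq\; |N(v)| - |N(u)| \;=\; |N(v) \setminus N(u)| - |N(u) \setminus N(v)| \;\leq\; |N(v) \setminus N(u)| - 2,
\]
so $|N(v) \setminus N(u)| \geq 2$, as required.

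There is essentially no obstacle here: the argument is a direct cardinality computation once one observes that ``Rule~(B\ref{brule:2-2}) is not applicable'' means precisely that no pair in $I$ has both symmetric differences of size at least $2$. The only point requiring a little care is phrasing the contrapositive correctly, i.e.\ ensuring that producing a single pair $u,v$ with both $|N(u)\setminus N(v)|\ge 2$ and $|N(v)\setminus N(u)|\ge 2$ is genuinely enough to contradict non-applicability of the rule.
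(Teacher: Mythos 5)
Your proof is correct and is essentially the paper's own argument: assume $|N(u)\setminus N(v)|\geq 2$, deduce $|N(v)\setminus N(u)|\geq |N(u)\setminus N(v)|\geq 2$ from $|N(u)|\leq|N(v)|$ via the intersection-count identity, and conclude that Rule~(B\ref{brule:2-2}) would apply to $u,v$, a contradiction. The only difference is that you spell out the cardinality computation which the paper leaves implicit.
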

\begin{proof}
Suppose conversely that $|N(u) \setminus N(v)| \geq 2$.
Then, $|N(v) \setminus N(u)| \geq |N(u) \setminus N(v)| \geq 2$.
Therefore, Rule~(\currentrule) can be applied on $u,v$, a contradiction.
\end{proof}

We now consider two cases.

\paragraph{Case 1}
In the first case, every two vertices in $I_0$ are twins.
The algorithm picks an arbitrary vertex $u \in I_0$ and vertices
$a_1,a_2 \in N(u)$.
Since Rule~(R\ref{rrule:no-P4}) cannot be applied, there is
a vertex $v_1 \in I$ such that $a_1 \notin N(v_1)$ and
a vertex $v_2 \in I$ such that $a_2 \notin N(v_2)$.
For $i = 1,2$ we have that $v_i \notin I_0$ since $v_i$ is not a twin of $u$.
Since $u \in I_0$, it follows that $|N(v_i)| > |N(u)|$.
Thus, $|N(v_i) \setminus N(u)| > |N(u) \setminus N(v_i)| \geq 1$.
By Lemma~\ref{lem:Nu-Nv} and the fact that $a_i \in N(u) \setminus N(v_i)$
we obtain that $ N(u) \setminus N(v_i) = \{a_i\}$ and thus
$N(u) \setminus \{a_i\} \subseteq N(v_i)$.
In particular, $a_2 \in N(v_1)$ and $a_1 \in N(v_2)$.
Note that this implies that $v_1 \neq v_2$.

\begin{lemma}
$|(N(v_1) \cap N(v_2)) \setminus N(u)| \geq 2$.\label{lem:Nv1Nv2}
\end{lemma}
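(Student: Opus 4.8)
The plan is to exploit the structural facts already established in Case 1, namely that $N(u) \setminus \{a_1\} \subseteq N(v_1)$ and $N(u) \setminus \{a_2\} \subseteq N(v_2)$, together with $|N(v_i)| > |N(u)|$ and the degree-minimality of $u$ in $I_0$. First I would apply Lemma~\ref{lem:Nu-Nv} once more, but now to the pair $v_1,v_2$: since both have degree strictly larger than $|N(u)|$, we do not immediately know their relative order, so I would split on which of $|N(v_1)|,|N(v_2)|$ is smaller (say $|N(v_1)| \le |N(v_2)|$ without loss of generality) and conclude $|N(v_1) \setminus N(v_2)| \le 1$. The heart of the argument is a counting estimate on $|(N(v_1) \cap N(v_2)) \setminus N(u)|$, so I would set up the inclusion of $N(v_1)$ into the disjoint pieces $(N(v_1)\cap N(v_2)\cap N(u))$, $(N(v_1)\cap N(v_2))\setminus N(u)$, $(N(v_1)\setminus N(v_2))\cap N(u)$ and $(N(v_1)\setminus N(v_2))\setminus N(u)$, and similarly analyse where $N(u)$ sits.

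The key steps, in order, are: (i) observe $N(u) \setminus \{a_1\} \subseteq N(v_1)$ means $N(v_1)$ contains all of $N(u)$ except possibly $a_1$; since $a_2 \in N(v_1)$ this gives $|N(u) \cap N(v_1)| \ge |N(u)| - 1$ and in fact $|N(u)\setminus N(v_1)|=1$ with $N(u)\setminus N(v_1)=\{a_1\}$. (ii) From $|N(v_1)| > |N(u)|$ deduce $|N(v_1) \setminus N(u)| \ge |N(u)\setminus N(v_1)| + 1 = 2$; similarly $|N(v_2)\setminus N(u)|\ge 2$. (iii) Now intersect: a vertex of $N(v_1)\setminus N(u)$ fails to lie in $N(v_2)$ only if it lies in $N(v_1)\setminus N(v_2)$, and by the application of Lemma~\ref{lem:Nu-Nv} to $v_1,v_2$ there is at most one such vertex (after choosing the orientation so that $v_1$ has the smaller degree). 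Hence at least $|N(v_1)\setminus N(u)| - 1 \ge 1$ vertices of $N(v_1)\setminus N(u)$ lie in $N(v_2)$, i.e.\ in $(N(v_1)\cap N(v_2))\setminus N(u)$. This alone only gives a lower bound of $1$, so (iv) I would tighten the count: symmetrically, if instead $|N(v_2)| \le |N(v_1)|$, at most one vertex of $N(v_2)\setminus N(u)$ escapes $N(v_1)$; combining the lower bounds $|N(v_1)\setminus N(u)|\ge 2$ and $|N(v_2)\setminus N(u)|\ge 2$ with the fact that the symmetric difference $(N(v_1)\triangle N(v_2))$ restricted outside $N(u)$ has size at most $1+1$ — wait, more carefully, one side of the symmetric difference has size $\le 1$ — forces the common part outside $N(u)$ to have size at least $2$.

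The main obstacle I anticipate is making step (iv) airtight: Lemma~\ref{lem:Nu-Nv} bounds only $|N(v_1)\setminus N(v_2)|$ (the "smaller minus larger" direction), not $|N(v_2)\setminus N(v_1)|$, so a naive symmetric-difference bound does not apply. The right move is: assume WLOG $|N(v_1)|\le|N(v_2)|$, so $|N(v_1)\setminus N(v_2)|\le 1$. Then $N(v_1)\setminus N(u)$, which has size $\ge 2$, is split into $(N(v_1)\setminus N(v_2))\setminus N(u)$ of size $\le 1$ and $(N(v_1)\cap N(v_2))\setminus N(u)$; hence the latter has size $\ge 2-1 = 1$ — still not enough. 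So the genuinely needed extra input is that $|N(v_1)\setminus N(u)|$ is actually $\ge$ something larger, or that $N(u)\setminus\{a_1\}\subseteq N(v_1)$ combined with $N(u)\setminus\{a_2\}\subseteq N(v_2)$ forces $N(u)\setminus\{a_1,a_2\}\subseteq N(v_1)\cap N(v_2)$, i.e.\ almost all of $N(u)$ already lies in the pairwise intersection; the claim's target set $(N(v_1)\cap N(v_2))\setminus N(u)$ is instead about the "new" vertices, and these are supplied by the two large sets $N(v_i)\setminus N(u)$ overlapping heavily. I would therefore finish by arguing: $N(v_2)\setminus N(u)$ has size $\ge 2$ and, since $|N(v_1)\setminus N(v_2)|\le 1$, at most one vertex of $N(v_1)$ avoids $N(v_2)$, so in particular the $\ge 2$ vertices of $N(v_1)\setminus N(u)$ lose at most one to $N(v_1)\setminus N(v_2)$, leaving $\ge 1$ in $(N(v_1)\cap N(v_2))\setminus N(u)$ — and to reach $2$, also bring in that $N(v_2)\setminus N(v_1)$ may be large but those vertices are in $N(v_2)$ and outside $N(v_1)$ hence irrelevant; the cleanest route is probably to instead bound $|N(v_1)\cap N(v_2)|\ge |N(v_1)| - 1 \ge |N(u)|+1-1=|N(u)|$ using $|N(v_1)\setminus N(v_2)|\le 1$ and $|N(v_1)|\ge|N(u)|+1$, then subtract $|N(v_1)\cap N(v_2)\cap N(u)|\le |N(u)| - ?$; since $a_1\notin N(v_1)\supseteq$ nothing helps directly, but $a_1\in N(u)$ and $a_1\notin N(v_1)$ gives $a_1\notin N(v_1)\cap N(v_2)\cap N(u)$, so $|N(v_1)\cap N(v_2)\cap N(u)|\le |N(u)|-1$ — wait we also need $a_2\notin N(v_2)$, giving $a_2\notin N(v_1)\cap N(v_2)\cap N(u)$ as well, and $a_1\ne a_2$, so $|N(v_1)\cap N(v_2)\cap N(u)|\le|N(u)|-2$. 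Therefore $|(N(v_1)\cap N(v_2))\setminus N(u)| = |N(v_1)\cap N(v_2)| - |N(v_1)\cap N(v_2)\cap N(u)| \ge |N(u)| - (|N(u)|-2) = 2$, which is exactly the claim. That last chain is the argument I would write up carefully.
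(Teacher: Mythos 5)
Your final counting chain is correct and proves the lemma: assuming WLOG $|N(v_1)|\le|N(v_2)|$ (legitimate, since the claim and the hypotheses are symmetric under swapping the pairs $(a_1,v_1)$ and $(a_2,v_2)$), Lemma~\ref{lem:Nu-Nv} gives $|N(v_1)\setminus N(v_2)|\le 1$, hence $|N(v_1)\cap N(v_2)|\ge|N(v_1)|-1\ge|N(u)|$, while $a_1\notin N(v_1)$ and $a_2\notin N(v_2)$ with $a_1\neq a_2\in N(u)$ give $|N(v_1)\cap N(v_2)\cap N(u)|\le|N(u)|-2$, so the difference is at least $2$. This uses exactly the same two ingredients as the paper (Lemma~\ref{lem:Nu-Nv} applied to $v_1,v_2$ under the WLOG degree ordering, and $|N(v_i)|>|N(u)|$), but finishes differently: the paper pins down $N(v_1)\setminus N(v_2)=\{a_2\}\subseteq N(u)$, concludes $N(v_1)\setminus N(u)\subseteq N(v_1)\cap N(v_2)$, and quotes $|N(v_1)\setminus N(u)|\ge 2$, whereas you never need $a_2\in N(v_1)$ or $a_1\in N(v_2)$ and instead do a cardinality subtraction. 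Your route is marginally more robust for that reason; the paper's containment argument is shorter and exhibits the two witnesses explicitly. One editorial remark: the exploratory middle of your write-up (steps (iii)--(iv)), which only reaches the bound $1$, should be deleted in a final version, since it is entirely superseded by the closing chain; also note the paper's own proof states the intersection equals $N(v_2)\setminus N(u)$ where it should be $N(v_1)\setminus N(u)$, a harmless slip since both sets were shown to have size at least $2$.
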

\begin{proof}
Suppose without loss of generality that $|N(v_1)| \leq |N(v_2)|$.
By Lemma~\ref{lem:Nu-Nv} and the fact that $a_2 \in N(v_1) \setminus N(v_2)$
we have that 
$N(v_1) \setminus N(u) \subseteq N(v_2)$.
Therefore, $(N(v_1) \cap N(v_2)) \setminus N(u) = N(v_2) \setminus N(u)$.
We have shown above that $|N(v_2) \setminus N(u)| \geq 2$.
\end{proof}

\begin{brule}
If Case~1 occurs and
$a_1,a_2 \in N(w)$ for every $w \in I \setminus \{v_1,v_2\}$,
branch on $\{u\}$, $(N(v_1) \cap N(v_2)) \setminus N(u)$,
and $\{v_1,v_2\}$.\label{brule:case1-1}
\end{brule}

We now prove the safeness of Rule~(\currentrule).
In order to delete the paths of the form $u,a_1,b,v_1$ or $u,a_2,b,v_2$
for some $b \in (N(v_1) \cap N(v_2)) \setminus N(u)$,
a solution $S$ must satisfy one of the following
(1) $u \in S$
(2) $(N(v_1) \cap N(v_2)) \setminus N(u) \subseteq S$, or
(3) $S$ contains at least one vertex from $\{a_1,v_1\}$ and at least one vertex
from $\{a_2,v_2\}$.
Suppose that $S$ is a solution that satisfies~(3).
Due to the assumption of Rule~(\currentrule) and the fact that $a_1 \in N(v_2)$,
we have that every vertex in $I \setminus \{v_1\}$ is adjacent to $a_1$.
Therefore, every induced $P_4$ that contains $a_1$ is of the form $v_1,x,a_1,y$.
Thus, if $v_1 \notin S$  then $S' = (S \setminus \{a_1\}) \cup \{v_1\}$
is also a solution.
Similarly, if $v_2 \notin S$ then $S' = (S \setminus \{a_2\}) \cup \{v_2\}$
is also a solution.
Therefore, if $(G,k)$ is a yes instance, there is a solution $S$ such that
either $S$ satisfies (1) or (2) above, or $\{v_1,v_2\} \subseteq S$.

By Lemma~\ref{lem:Nv1Nv2},
the branching vector of Rule~(\currentrule) is at least $(1,2,2)$.

\begin{brule}
If Case~1 occurs, let $w \in I \setminus \{v_1,v_2\}$ be a vertex such that
$\{a_1,a_2\} \not\subseteq N(w)$,
and without loss of generality assume that $w_1 \notin N(w)$.
Branch on $\{u\}$, $(N(v_1) \cap N(v_2)) \setminus N(u)$,
and on the sets in
$\{ \{a_1\}, \{v_1\} \cup (N(w) \setminus  N(u)), \{v_1,w\} \}
\circ \{ \{a_2\},\{v_2\} \}$.\label{brule:case1-2}
\end{brule}

We now show the safeness of Rule~(\currentrule).
In order to delete the induced paths of the form $u,a_1,b,v_1$ or $u,a_2,b,v_2$
for $b \in (N(v_1) \cap N(v_2)) \setminus N(u)$,
a solution $S$ must satisfy (1), (2), or (3) above.
Suppose that (1) is not satisfied (namely, $u \notin S$) and that (3) is
satisfied.
Additionally, suppose that $a_1 \notin S$.
Therefore, $v_1 \in S$ and $S$ contains at least one vertex from $\{a_2,v_2\}$.
In order to delete the induced paths of the form $u,a_1,c,w$ for
every $c \in N(w) \setminus N(u)$,
either $w \in S$ or $N(w)\setminus N(u) \subseteq S$.
We have that $w \notin I_0$ since $w$ is not a twin of $u$.
Since $u \in I_0$, it follows that $|N(w)| > |N(u)|$.
Thus, $|N(w) \setminus N(u)| > |N(u) \setminus N(w)| \geq 1$.
From the previous inequality, Lemma~\ref{lem:Nv1Nv2},
and the fact that $v_1,v_2,a_2 \notin N(w) \setminus N(u)$,
it follows that the branching vector of Rule~(\currentrule)
is at least $(1,2,2,4,3,2,4,3)$.

\paragraph{Case 2}
In the second case, there are non-twin vertices in $I_0$.
Suppose that $u_1,u_2 \in I_0$ are non-twin vertices,
where the choice of $u_1,u_2$ will be given later.
By Lemma~\ref{lem:Nu-Nv},
$|N(u_1) \setminus N(u_2)| = |N(u_2) \setminus N(u_1)| = 1$.
Denote $N(u_1) \setminus N(u_2) = \{a_1\}$ and
$N(u_2) \setminus N(u_1) = \{a_2\}$.
Let $I_1 = I_0  \setminus (\twins{u_1} \cup \twins{u_2})$.

\begin{lemma}\label{lem:sunflower}
If $I_1 \neq \emptyset$ then either
(1) for every $u \in I_1$,
$N(u)$ consists of $N(u_1) \cap N(u_2)$ plus an additional vertex
that is not in $\{a_1,a_2\}$,
or
(2) for every $u \in I_1$,
$N(u)$ consists of $a_1$, $a_2$, and all the vertices of $N(u_1) \cap N(u_2)$
except one vertex.
\end{lemma}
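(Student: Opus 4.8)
The plan is to pin down, for a single vertex $u \in I_1$, the two possible shapes of $N(u)$, and then rule out the coexistence of the two shapes inside $I_1$. Throughout I would write $K = N(u_1)\cap N(u_2)$, so that $N(u_1) = K\cup\{a_1\}$ and $N(u_2) = K\cup\{a_2\}$ with $a_1\ne a_2$ and $a_1,a_2\notin K$; since every vertex of $I$ has degree at least $2$ at this stage, $|K|\ge 1$, which is what makes shape~(2) below meaningful.

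First I would fix $u\in I_1$. As $u\in I_0$ we have $|N(u)| = |N(u_1)| = |N(u_2)|$, and as $u$ is not a twin of $u_1$ or of $u_2$ we have $N(u)\ne N(u_1)$ and $N(u)\ne N(u_2)$. Applying Lemma~\ref{lem:Nu-Nv} to the equal-cardinality pairs $(u,u_1)$ and $(u,u_2)$ gives $|N(u)\setminus N(u_1)| = |N(u_1)\setminus N(u)| = 1$ and $|N(u)\setminus N(u_2)| = |N(u_2)\setminus N(u)| = 1$; equivalently, $N(u)$ is obtained from $N(u_1)$ by deleting one vertex $p\in N(u_1)$ and adding one vertex $q\notin N(u_1)$.

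Next I would split on whether $p = a_1$. If $p = a_1$, then $N(u) = K\cup\{q\}$ with $q\notin K\cup\{a_1\}$, and moreover $q\ne a_2$, since otherwise $N(u) = N(u_2)$, contradicting that $u$ is not a twin of $u_2$; this is shape~(1). If $p\ne a_1$, then $p\in K$, hence $a_1\in N(u)$ and so $a_1\in N(u)\setminus N(u_2)$; since $|N(u)\setminus N(u_2)| = 1$ and $K\setminus\{p\}\subseteq N(u_2)$, the added vertex $q$ must lie in $N(u_2) = K\cup\{a_2\}$, and as $q\notin K$ this forces $q = a_2$, giving $N(u) = (K\setminus\{p\})\cup\{a_1,a_2\}$; this is shape~(2). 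Hence every vertex of $I_1$ has shape~(1) or shape~(2).

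Finally, to obtain uniformity over $I_1$ — which I expect to be the only non-obvious step — I would suppose toward a contradiction that some $u\in I_1$ has shape~(1), say $N(u) = K\cup\{y\}$ with $y\notin K\cup\{a_1,a_2\}$, while some $u'\in I_1$ has shape~(2), say $N(u') = (K\setminus\{x\})\cup\{a_1,a_2\}$ with $x\in K$. A direct check gives $N(u)\setminus N(u') = \{x,y\}$, which has size $2$ (note $x\ne y$, since $x\in K$ and $y\notin K$), contradicting Lemma~\ref{lem:Nu-Nv} applied to $u,u'\in I_0$, whose neighborhoods have equal cardinality. Thus all vertices of $I_1$ share the same shape, which is exactly the dichotomy asserted. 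The key realization is that uniformity is not a statement about $u_1,u_2$ at all, but follows from a further application of Lemma~\ref{lem:Nu-Nv} between two members of $I_1$; everything else is routine manipulation of the fixed decompositions $N(u_i) = K\cup\{a_i\}$.
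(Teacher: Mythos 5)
Your proof is correct and follows essentially the same route as the paper: both classify each $u \in I_1$ by applying Lemma~\ref{lem:Nu-Nv} to the equal-degree pairs $(u,u_1)$ and $(u,u_2)$, and both obtain uniformity by applying Lemma~\ref{lem:Nu-Nv} to two members of $I_1$ of different shapes (the paper exhibits $a_1,a_2 \in N(u_3)\setminus N(u)$, you exhibit $\{x,y\} \subseteq N(u)\setminus N(u')$ -- the same contradiction viewed from the other side of an equal-size symmetric difference).
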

\begin{proof}
We first claim that every $u \in I_1$, $|N(u) \cap {a_1,a_2}|$ is either 0 or 2.
Suppose conversely that $N(u)$ contains exactly one vertex from $a_1,a_2$ and
without loss of generality, $a_2 \in N(u)$ and $a_1 \notin N(u)$.
By Lemma~\ref{lem:Nu-Nv} on $u,u_1$ we obtain that $N(u)$ contains all the
vertices in $N(u_1) \setminus \{a_1\} = N(u_2) \setminus \{a_2\}$.
Since we assumed that $a_2 \in N(u)$, we have that $N(u_2) \subseteq N(u)$.
From the fact that $|N(u_2)| = |N(u)|$ we obtain that $N(u_2) = N(u)$,
contradicting the assumption that $u$ is not a twin of $u_2$.
Therefore, $|N(u) \cap {a_1,a_2}|$ is either 0 or 2.

We first assume that there is no vertex $u \in I_1$ such that
$a_1,a_2 \in N(u)$.
From the claim above we have that $a_1 \notin N(u)$.
By Lemma~\ref{lem:Nu-Nv} on $u,u_1$, $N(u)$ contains all the vertices in
$N(u_1) \setminus \{a_1\} = N(u_1)\cap N(u_2)$ plus an additional vertex
that is not in $\{a_1,a_2\}$.

Now suppose that there is a vertex $u_3 \in I_1$ such that $a_1,a_2 \in N(u_3)$.
Consider some $u \in I_1$. We claim that $a_1 \in N(u)$.
Suppose conversely that $a_1 \notin N(u)$.
From the claim above, $a_2 \notin N(u)$.
Therefore, $a_1,a_2 \in N(u_3)\setminus N(u)$,
contradicting Lemma~\ref{lem:Nu-Nv}.
Thus, $a_1 \in N(u)$. From the claim above and Lemma~\ref{lem:Nu-Nv} we conclude
that $N(u)$ contains $a_2$ and all the vertices in $N(u_1) \cap N(u_2)$ except
one vertex.
\end{proof}
If $I_1 = \emptyset$ or the first case of Lemma~\ref{lem:sunflower} occurs,
we say that the vertices of $I_0$ form a \emph{sunflower}.
Note that if the vertices of $I_0$ do not form a sunflower,
for every vertex $a \in N(u_1) \cup N(u_2)$ there are non-twin vertices
$u,u' \in I_0$ that are adjacent to $a$.

\begin{brule}
If there are non-twin vertices $u_1,u_2 \in I_0$ such that $a_1,a_2 \in N(w)$
for every $w \in I \setminus I_0$, then suppose without loss of generality that
$|\twins{u_1}| \leq |\twins{u_2}|$.
Branch on $\twins{u_1}$ and $\{a_1\}$.\label{brule:case2-1}
\end{brule}

To prove the safeness of Rule~(\currentrule), suppose that $(G,k)$ is a yes
instance and let $S$ be a solution.
If $a_1 \in S$ or $\twins{u_1} \subseteq S$ we are done, so suppose that
that $a_1 \notin S$ and $\twins{u_1} \not\subseteq S$.
We can assume that $S \cap \twins{u_1} = \emptyset$ (otherwise,
$S' = S \setminus \twins{u_1}$ is also a solution).
Since $u'_1,a_1,a_2,u'_2$ is an induced path for every $u'_1 \in \twins{u_1}$
and $u'_2 \in \twins{u_2}$, either $a_2 \in S$ or $\twins{u_2} \subseteq S$.
Note that we can assume that if $S$ contains at least one vertex from
$\twins{u_2}$ then it contains all the vertices of $\twins{u_2}$.
Define a set $S'$ by taking the vertices in
$S \setminus (\{a_2\} \cup \twins{u_2})$.
Additionally, if $a_2 \in S$, add $a_1$ to $S'$, and
if $\twins{u_2} \subseteq S$, add $\twins{u_1}$ to $S'$.
We now show that $S'$ is also a solution.
Since $|\twins{u_1}| \leq |\twins{u_2}|$, we have that $|S'| \leq |S| \leq k$.
Suppose conversely that $G-S'$ contains an induced $P_4$ and denote this path
by $P'$.
Create a path $P$ by taking $P'$ and performing the following steps:
(1) If $a_1$ is in $P'$, replace it with $a_2$.
(2) If $a_2$ is in $P'$, replace it with $a_1$.
(3) If $P'$ contains a vertex $u'_1 \in \twins{u_1}$, replace it with $u_2$.
(4) If $P'$ contains a vertex $u'_2 \in \twins{u_2}$, replace it with $u_1$.
Recall that $a_1,a_2 \in N(w)$ for every $w \in I \setminus I_0$.
Additionally, for every $u \in I_1$, $a_1,a_2 \notin N(u)$ if
the vertices of $I_0$ form a sunflower, and $a_1,a_2 \in N(u)$ otherwise.
Therefore, for every two vertices $x',y'$ in $P'$ and the corresponding vertices
$x,y$ in $P$, we have that $(x,y)$ is an edge if and only if $(x',y')$ is
an edge.
It follows that $P$ is also an induced path in $G$.
From the assumptions that $a_1 \notin S$ and $\twins{u_1} \cap S = \emptyset$
and from the definition of $S'$ we have that $S$ does not contain a vertex of
$P$.
This contradicts the assumption that $S$ is a solution.
Therefore, $S'$ is a solution. The solution $S'$ contains either $\twins{u_1}$
or $\{a_1\}$, and therefore Rule~(\currentrule) is safe.

The branching vector of Rule~(\currentrule) is at least $(1,1)$.

Now suppose that Rule~(B\ref{brule:case2-1}) cannot be applied.
We choose non-twin vertices $u_1,u_2 \in I_0$,
a vertex $a \in N(u_1) \cap N(u_2)$, and a vertex $v \in I \setminus I_0$ that
is not adjacent to $a$ as follows.
\begin{enumerate}
\item
If the vertices of $I_0$ form a sunflower,
pick arbitrary non-twin vertices $u_1,u_2 \in I_0$.
Pick $a \in N(u_1) \cap N(u_2)$.
Since Rule~(R\ref{rrule:no-P4}) cannot be applied, there is a vertex $v\in I$
such that $a \notin N(v)$.
Since $a \in N(u)$ for every $u \in I_0$
(as the vertices of $I_0$ form a sunflower)
it follows that $v \in I \setminus I_0$.
\item
Otherwise, since Rule~(B\ref{brule:case2-1}) cannot be applied,
there is a vertex $v \in I \setminus I_0$ such that
$\bigcup_{u\in I_0} N(u) \not\subseteq N(v)$.
Pick $a \in (\bigcup_{u\in I_0} N(u)) \setminus N(v)$.
Since the vertices of $I_0$ do not form a sunflower,
there are non-twin vertices $u_1,u_2 \in I_0$ such that
$a \in N(u_1) \cap N(u_2)$.
\end{enumerate}
Since Rule~(B\ref{brule:case2-1}) cannot be applied, there is a vertex
$w \in I \setminus I_0$ such that, without loss of generality,
$a_1 \notin N(w)$.

\begin{brule}
Branch on $\{u_1\}$, $(N(v) \cap N(w)) \setminus N(u_1)$,
and on the sets in
$\{ \{a\},\{v\} \} \circ \{ \{a_1\},\{w,a_2\},\{w,u_2\} \}$.
\end{brule}

The proof of the safeness of Rule~(\currentrule) is similar to the proof
for Rule~(B\ref{brule:case1-2}).
To bound the branching vector of Rule~(\currentrule) we use the following lemma.

\begin{lemma}
$|(N(v) \cap N(w)) \setminus N(u_1)| \geq 2$.\label{lem:NvNw}
\end{lemma}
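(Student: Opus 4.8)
The plan is to mimic the argument used to establish Lemma~\ref{lem:Nv1Nv2}, exploiting the way $v$ and $w$ were chosen. Recall that $v,w \in I \setminus I_0$, so $|N(v)| > |N(u_1)|$ and $|N(w)| > |N(u_1)|$; also $a \in N(u_1)$ while $a \notin N(v)$, and $a_1 \in N(u_1)$ while $a_1 \notin N(w)$. First I would record, using Lemma~\ref{lem:Nu-Nv} applied to the pair $u_1,v$ (legitimate since $|N(u_1)| \le |N(v)|$) together with $a \in N(u_1)\setminus N(v)$, that $N(u_1)\setminus N(v) = \{a\}$, hence $N(u_1)\setminus\{a\} \subseteq N(v)$. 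Symmetrically, Lemma~\ref{lem:Nu-Nv} on $u_1,w$ with $a_1 \in N(u_1)\setminus N(w)$ gives $N(u_1)\setminus N(w) = \{a_1\}$, so $N(u_1)\setminus\{a_1\} \subseteq N(w)$.

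Next I would compare $v$ and $w$ directly. Assume without loss of generality $|N(v)| \le |N(w)|$. Since $a \in N(u_1) \subseteq N(w)\cup\{a_1\}$ and $a \ne a_1$ (as $a \notin N(v)$ but $a_1$ might or might not be; in any case $a \in N(w)$ because $a \in N(u_1)\setminus\{a_1\}\subseteq N(w)$ provided $a \ne a_1$, and $a=a_1$ is impossible since $a\notin N(v)$ whereas we will want $a_1$ handled separately). The cleaner route: observe $a \in N(w)\setminus N(v)$ (we have $a\in N(w)$ from the previous paragraph and $a\notin N(v)$ by choice), so Lemma~\ref{lem:Nu-Nv} applied to $v,w$ yields $N(v)\setminus N(w) \subseteq \{\text{at most one vertex}\}$... wait — actually I want the containment the other way, so I would instead apply it to get $N(v)\setminus N(w)$ small only if $|N(v)|\le|N(w)|$; combined with $a\in N(w)\setminus N(v)$ this forces $N(w)\setminus N(v)$ to have size $\ge 2$ is not immediate. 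The right statement to extract is: since $|N(v)|\le|N(w)|$ and $|N(w)\setminus N(v)| \ge |N(v)\setminus N(w)|$, and we need $(N(v)\cap N(w))\setminus N(u_1)$ large. So I would write $(N(v)\cap N(w))\setminus N(u_1) = (N(v)\setminus N(u_1)) \cap N(w)$, note $N(v)\setminus N(u_1)$ has size $|N(v)|-|N(u_1)\cap N(v)| = |N(v)| - (|N(u_1)|-1) \ge 2$, and then argue all but at most one of these vertices lie in $N(w)$ using Lemma~\ref{lem:Nu-Nv} on the pair $v,w$ (so that $N(v)\setminus N(w)$ is a singleton, namely contained in $\{a\}$ — but $a\notin N(v)$, so actually $N(v)\subseteq N(w)$!). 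That last observation — $N(v)\subseteq N(w)$ — is the crux: Lemma~\ref{lem:Nu-Nv} on $v,w$ gives $|N(v)\setminus N(w)|\le 1$, and if it equals $1$ then that unique vertex together with $a$ would give $|N(w)\setminus N(v)|\ge 2$, fine, but to conclude I just need $|N(v)\setminus N(u_1)|\ge 2$ and $|N(v)\setminus N(w)|\le 1$, hence $|(N(v)\setminus N(u_1))\cap N(w)| \ge |N(v)\setminus N(u_1)| - |N(v)\setminus N(w)| \ge 2-1 = 1$, which is not quite enough.

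To get the sharp bound of $2$ I would therefore handle the case $|N(v)\setminus N(w)|=1$ separately: in that case $a \in N(w)\setminus N(v)$ forces, via Lemma~\ref{lem:Nu-Nv} on $v,w$, that $N(w)\setminus\{a\} \subseteq$... no — rather, if $|N(v)| \le |N(w)|$ and $|N(v)\setminus N(w)|=1$ then by Lemma~\ref{lem:Nu-Nv} this is consistent, but I can instead swap and note $|N(w)\setminus N(v)| \ge 2$ (it contains $a$ and at least one more since $|N(w)|\ge|N(v)|$ and removing one vertex each side), so I would apply the rule to the pair $(v,w)$ the other direction, or observe $|N(v)| = |N(w)|$ is impossible when they are not twins unless... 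Actually the cleanest fix: use that $|N(v)\setminus N(u_1)| \ge 2$ and $|N(v)\setminus N(w)| \le 1$, and additionally note $a_1 \notin N(w)$ but whether $a_1\in N(v)$ is unknown — if $a_1\in N(v)$ then $a_1 \in N(v)\setminus N(w)$, making that singleton equal $\{a_1\}$, and since $a_1 \in N(u_1)$, $a_1$ is not counted in $N(v)\setminus N(u_1)$ anyway, so $(N(v)\setminus N(u_1))\cap N(w) = N(v)\setminus N(u_1)$ entirely, giving $\ge 2$; if $a_1\notin N(v)$ then the deleted singleton is some vertex possibly in $N(v)\setminus N(u_1)$, losing at most one, but then I use $|N(v)\setminus N(u_1)| = |N(v)|-|N(u_1)|+1 \ge 2$, and I claim it is actually $\ge 3$ when we also know $v\notin I_0$ strictly — hmm, that needs $|N(v)|\ge|N(u_1)|+2$, which need not hold. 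The main obstacle is exactly this off-by-one: pinning down that the unique possible vertex of $N(v)\setminus N(w)$ does not eat into $N(v)\setminus N(u_1)$, which I expect to resolve by observing that this vertex must be $a$ or handled by the choice of $a$ as a \emph{common} neighbor of $u_1,u_2$ forcing an extra element; I would flesh this out by a short case analysis on whether $a_1 \in N(v)$, using Lemma~\ref{lem:Nu-Nv} on each relevant pair, exactly parallel to the proof of Lemma~\ref{lem:Nv1Nv2}.

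\begin{proof}
Assume without loss of generality that $|N(v)| \leq |N(w)|$.
Since $v,w \in I \setminus I_0$ we have $|N(v)|, |N(w)| > |N(u_1)|$.
Applying Lemma~\ref{lem:Nu-Nv} to $u_1,v$ (with $a \in N(u_1)\setminus N(v)$)
gives $N(u_1)\setminus N(v) = \{a\}$, hence $|N(v)\setminus N(u_1)| = |N(v)| - |N(u_1)| + 1 \geq 2$.
Applying Lemma~\ref{lem:Nu-Nv} to $u_1,w$ (with $a_1 \in N(u_1)\setminus N(w)$)
gives $N(u_1)\setminus N(w) = \{a_1\}$, so in particular $a \in N(w)$ since $a \ne a_1$.
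Thus $a \in N(w)\setminus N(v)$, and applying Lemma~\ref{lem:Nu-Nv} to $v,w$
yields $N(v)\setminus N(w) = \{a\}$ or $N(v)\setminus N(w) = \emptyset$;
since $a \notin N(v)$, in fact $N(v) \subseteq N(w)$.
Therefore $(N(v)\cap N(w))\setminus N(u_1) = N(v)\setminus N(u_1)$, which has size at least $2$.
\end{proof}
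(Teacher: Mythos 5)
There is a genuine error in your key step. From Lemma~\ref{lem:Nu-Nv} applied to the pair $v,w$ (with $|N(v)|\leq|N(w)|$) you may only conclude $|N(v)\setminus N(w)|\leq 1$; the lemma gives no information about \emph{which} vertex that difference contains. Your witness $a$ lies in the \emph{other} difference, $N(w)\setminus N(v)$, so the deduction ``$N(v)\setminus N(w)=\{a\}$ or $\emptyset$, hence (as $a\notin N(v)$) $N(v)\subseteq N(w)$'' does not follow --- and in fact the conclusion is false: since $N(u_1)\setminus N(v)=\{a\}$ and $a_1\in N(u_1)$ with $a_1\neq a$, you have $a_1\in N(v)$, while $a_1\notin N(w)$ by the choice of $w$; hence $a_1\in N(v)\setminus N(w)$ and $N(v)\not\subseteq N(w)$. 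The repair is exactly the paper's argument for this case: take $a_1$ as the witness, so Lemma~\ref{lem:Nu-Nv} applied to $v,w$ together with $a_1\in N(v)\setminus N(w)$ gives $N(v)\setminus N(w)=\{a_1\}$; since $a_1\in N(u_1)$, this yields $N(v)\setminus N(u_1)\subseteq N(w)$, hence $(N(v)\cap N(w))\setminus N(u_1)=N(v)\setminus N(u_1)$, which you correctly showed has size at least $2$.

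A smaller point: the ``without loss of generality $|N(v)|\leq|N(w)|$'' needs justification, because $v$ and $w$ are not interchangeable as defined ($a\notin N(v)$ whereas $a_1\notin N(w)$). It can be justified by the symmetry swapping $(v,a)$ with $(w,a_1)$, since every fact you use is preserved under this swap; but in the swapped case the relevant witness becomes $a\in N(w)\setminus N(v)$, giving $N(w)\setminus N(v)=\{a\}$ and $(N(v)\cap N(w))\setminus N(u_1)=N(w)\setminus N(u_1)$. This is precisely why the paper handles the two cases $|N(v)|>|N(w)|$ and $|N(v)|\leq|N(w)|$ explicitly rather than by a symmetry claim, and your proof should either do the same or state the swap explicitly.
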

\begin{proof}
Since $|N(v)| > |N(u_1)|$, we have that
$|N(v) \setminus N(u_1)| > |N(u_1) \setminus N(v)| \geq 1$.
Similarly,
$|N(w) \setminus N(u_1)| > |N(u_1) \setminus N(w)| \geq 1$.
By Lemma~\ref{lem:Nu-Nv} and the fact that $a_1 \in N(u_1) \setminus N(w)$
we have that $a \in N(w)$.

We consider two cases.
If $|N(v)| > |N(w)|$ then by Lemma~\ref{lem:Nu-Nv} and the fact that
$a \in N(w) \setminus N(v)$ we have that
$N(w) \setminus N(u_1) \subseteq N(w) \setminus \{a\} \subseteq N(v)$.
Therefore, $(N(v)\cap N(w)) \setminus N(u_1) = N(w) \setminus N(u_1)$
and the lemma follows since $|N(w) \setminus N(u_1)| \geq 2$.

If $|N(v)| \leq |N(w)|$ then by Lemma~\ref{lem:Nu-Nv} and the fact that
$a_1 \in N(v)\setminus N(w)$ we have that
$N(v) \setminus N(u_1) \subseteq N(v) \setminus \{a_1\} \subseteq N(w)$.
Therefore, $(N(v)\cap N(w)) \setminus N(u_1) = N(v)\setminus N(u_1)$
and the lemma follows since $|N(v) \setminus N(u_1)| \geq 2$.
\end{proof}

By Lemma~\ref{lem:NvNw},
the branching vector of Rule~(\currentrule) is at least $(1,2,2,2,3,3,3,3)$.

The rule with largest branching number is Rule~(B\ref{brule:2-2}) and
its branching number is at most 2.733.
Therefore, the running time of the algorithm is $O^*(2.733^k)$.

\bibliographystyle{abbrv}
\bibliography{split}

\end{document}